\newtheorem{property}{Property}
\newtheorem{Conjecture}{Conjecture}
\newcommand{\Zp}[1]{\mathbb{Z}_{p^{#1}}}
\newcommand{\Zt}[1]{\mathbb{Z}_{2^{#1}}}
\newcommand{\Zy}[1]{\mathbb{Z}_{3^{#1}}}
\begin{document}

\catchline{}{}{}{}{} 
\markboth{X. Lu, et al.}{Periodicity Analysis of the Logistic Map over Ring $\Zy{n}$}

\title{Periodicity Analysis of the Logistic Map over Ring \Large{$\Zy{n}$}}

\author{Xiaoxiong Lu}
\address{School of Mathematics and Computational Science,\\
Xiangtan University, Xiangtan 411105, Hunan, China}

\author{Eric Yong Xie}
\address{School of Computer Science,\\
Xiangtan University, Xiangtan 411105, Hunan, China}

\author{Chengqing Li\thanks{Corresponding author. Email: DrChengqingLi@gmail.com.}}

\address{Key Laboratory of Intelligent Computing \textup{\&} Information Processing of Ministry of Education,\\
 Xiangtan University, Xiangtan 411105, Hunan, China}

\maketitle

\begin{history}
Feb 16, 2023
\end{history}

\begin{abstract}
Periodicity analysis of sequences generated by a deterministic system is a long-standing challenge in both theoretical research and engineering applications.
To overcome the inevitable degradation of the Logistic map on a finite-precision circuit, its numerical domain is commonly converted from a real number field to a ring or a finite field.
This paper studies the period of sequences generated by iterating the Logistic map over ring $\Zy{n}$ from the perspective of its associate functional network, where every number in the ring is considered as a node, and the existing mapping relation between any two nodes is regarded as a directed edge.
The complete explicit form of the period of the sequences starting from any initial value is given theoretically and verified experimentally.
Moreover, conditions on the control parameter and initial value are derived, ensuring the corresponding sequences to achieve the maximum period over the ring.
The results can be used as ground truth for dynamical analysis
and cryptographical applications of the Logistic map over various domains.
\end{abstract}

\keywords{chaotic dynamics; the Logistic map; ring; periodicity analysis; pseudo-random number generator; state-mapping network.}

\section{Introduction}

Complex dynamics of chaos systems attracted researchers use them as an alternative way to design secure and efficient pseudo-random generators and
encryption algorithms: Logistic map \cite{Collin:PRN:CP1992,Chen:Logistic:TCASII10,garcia2018chaos:TIM18,Bus:shiftlog:IJBC2023}, Chebyshev polynomials \cite{Liao:ITC:2010,Yoshioka:ChebyshevPk:TCAS2:2020}, R\'enyi map~\cite{Addabbo:Reyi:TCSI2007}, Tent map \cite{Jessa:tent:CSI2002}, Cat map \cite{Falcioni:PRNS:PRE2005, Catchen2013period2e,Souza:cat3m:CSII2021}, H\'{e}non map \cite{Galias:Henon:TCASI22}, Chua's attractor \cite{cqli:Diode:TCASI19},
and Lorenz system~\cite{zhoulili:encrypt:sc22}.
Among them, the Logistic map is one of the simplest systems exhibiting complex dynamics, and it is often used as a classic case to illustrate how complex chaotic phenomena arise from simple models~\cite{May:property:Nature76,cqli:network:TCASI2019,cqli:block:JISA20}.
Due to the effect of rounding errors and limited presentation precision in any digital device, real implementation of a chaotic system inevitably leads to tiresome dynamics degradation problem~\cite{kocarev2006discrete,cqli:network:TCASI2019}.
To solve this problem, the numerical domain of the chaotic map was
suggested by some researchers to extend from real number field to residue ring or finite field.
For example, some public-key encryption algorithms based on Chebyshev polynomials were implemented over ring or finite field to improve security performance and reliability~\cite{Yoshioka:ChebyshevPk:TCAS2:2020}.
In general, the sequences generated by iterating the Logistic map over ring or finite field have better randomness than that obtained over real number field, which allures them to design more seemingly efficient image encryption algorithms~\cite{Kazuyoshi:Period:IDICE17,Yang:log:Scis2017,Yang:log:Scis2018,Li:log3graph:IJPRA2019,Yoshida:log2n:ISIT2014}.

Period distribution of the sequences generated by a chaotic map over a given domain is a fundamental characteristic for evaluating its performance, which serves as precondition for real measurement of
the corresponding application merits~\cite{cqli:DNA:JVCI22}.
Using the generating function and Hensel's lifting method, F. Chen et al. systematically analyzed period distribution of the sequences generated by iterating Chebyshev polynomial over a prime field and/or Cat map over ring $\Zp{n}$, where $p$ is a prime number~\cite{Liao:ITC:2010,chen2012periodpe,Catchen2013period2e}.
Generally, the generating function method is used to deal with the period of sequences generated by a linear recursive generator.
However, the nonlinear complexity of sequences generated by iterating the Logistic map may make the approach fail.
Alternatively, the \emph{state-mapping network} (SMN), also known as functional graph in some research fields, is an essential visible way to analyze period of sequences~\cite{Roger:sqaure:DM1996,Vasiga:sqaure:DM2004}. A periodic sequence can be viewed as a circle in a SMN.
In~\cite{Roger:sqaure:DM1996,Vasiga:sqaure:DM2004}, the associated functional graphs of functions $x^2+c$ and $x^2$ over prime field are disclosed.
Reference~\cite{Yoshida:log2n:ISIT2014} presented some statistical properties and conjectures about the maximum period of sequences generated by iterating the Logistic map over $\Zt{n}$.
Using dynatomic polynomials, Yang~\emph{et al.} proved conjectures given in \cite{Yoshida:log2n:ISIT2014} and analyzed the maximum period of such sequences over $\Zy{n}$ for different control parameters~\cite{Yang:log:Scis2017,Yang:log:Scis2018}.
However, reference~\cite{Li:log3graph:IJPRA2019} pointed out some errors about
some reported properties in \cite{Yang:log:Scis2017} and summarized a conjecture about the maximum period of sequences over $\Zy{n}$.

Multiple research groups studied the sequences generated by iterating the Logistic map over ring $\Zy{n}$ from various perspectives
~\cite{Yang:log:Scis2017,Yang:log:Scis2018,Li:log3graph:IJPRA2019,Yoshida:log2n:ISIT2014}.
However, the full information on period of the sequence for any parameter and initial value is
still unclear. And the condition for the generated sequences owning the maximum period is also unknown, which limits its applications in image encryption and other cryptographic applications.
Using the internal structure of the SMN of the Logistic map over ring $\Zy{n}$,
this paper studied the period of sequences generated by iterating the map from any initial value.
The conjecture given in \cite{Li:log3graph:IJPRA2019} and some theorems given in \cite{Yang:log:Scis2017} are revised and proved.

The rest of the paper is organized as follows. Section~\ref{sec:period} reviews the known results on the Logistic map over ring $\Zy{n}$, and discloses the corresponding period distribution.
Finally, some conclusions are drawn.

\section{Sequences generated by iterating the Logistic map over ring $\Zy{n}$}
\label{sec:period}

In this section, we first review some previous work on the period of sequences generated by iterating the Logistic map over ring $\Zy{n}$. Then some general properties of the Logistic map are given. Finally, the explicit expression of the period of sequence generated by iterating the Logistic map from any initial value in ring $\Zy{n}$ is disclosed.

\subsection{Preliminary}

Let $\Zp{n}=\{0, 1, 2, \cdots, p^n-1\}$ be the ring of residue classes modulo $p^n$ with respect to modular addition and multiplication, where $p$ is a prime number and $n$ is a natural number.
In \cite{Kazuyoshi:Period:IDICE17,Yang:log:Scis2017,Yang:log:Scis2018,Li:log3graph:IJPRA2019, Yoshida:log2n:ISIT2014},
the numerical domain of the Logistic map was extended from real number field to ring $\Zp{n}$ with different $p$ and $n$. It can be expressed as
\begin{equation}\label{eq:log:ring}
\begin{aligned}
f_{p^n}(x) &= \frac{\mu x(p^n-1-x)}{p^n-1} \bmod {p^n}\\
&= \mu x(x+1) \bmod {p^n},
\end{aligned}
\end{equation}
where $\mu, x\in \Zp{n}$.
Given an initial value $x_0\in\Zp{n}$, the $i$-th iteration of the Logistic map over ring $\Zp{n}$ is
\begin{equation}
\label{eq:sequence}
x_i= f^i_{p^n}(x_0)= f_{p^n}(f_{p^n}^{i-1}(x_0)),
\end{equation}
where $i\geq 1$ and $f_{p^n}^0(x_0)=x_0$.
Then we can get a sequence $\{f^i_{p^n}(x_0)\}_{i\ge 0}$, which is denoted as $S(x_0; \mu, p^n)$.
If there exist integers $L>0$ and $m_0\geq 0$ such that
$x_{m+L}= x_m$ for all $m\geq m_0$, then sequence $S(x_0; \mu, p^n)$ is called \textit{ultimately periodic}, and $m_0$ is the \textit{pre-period} of the sequence.
Specially, the sequence is called \textit{periodic} if $m_0=0$.
The minimum positive integer among all possible values of $L$ is called the \textit{period} of the sequence,
which is denoted as $L(x_0; \mu, p^n)$.
Let
\begin{equation}
\label{eq:define:lmax }
L(\mu,p^n)= \max \{L(x_0; \mu,p^n) \mid x_0\in \Zp{n}\}
\end{equation}
represent the maximum period of sequences generated by iterating the Logistic map over ring $\Zp{n}$.
As for any $\mu$, reference~\cite{Yang:log:Scis2017} gave representation form of $L(\mu, 3^n)$:
\begin{equation}\label{eq:Lmax:3n}
L(\mu, 3^n)=
\begin{cases}
1 & \mbox{if } \mu\bmod 3 \in\{0,2\}; \\
3^{n-2} & \mbox{if } \mu\bmod 9=1;  \\
3^{n-3} & \mbox{if } \mu\bmod 9\in\{4,7\}.
\end{cases}
\end{equation}
However, reference~\cite{Li:log3graph:IJPRA2019} stated that Eq.~\eqref{eq:Lmax:3n}
does not hold for case $\mu\bmod 9\in\{1, 4, 7\}$ and concluded Conjecture~\ref{conj:3n}.
Moreover, we find Eq.~\eqref{eq:Lmax:3n} does not hold also when $\mu\bmod 3=2$. In Sec.~\ref{subsec:3n}, we revise Eq.~\eqref{eq:Lmax:3n} and prove Conjecture~\ref{conj:3n} as Corollary~\ref{Cor:x_0:max3n}.

\begin{Conjecture}
\label{conj:3n}
When $\mu\bmod 3=1$, the maximum period of sequences generated by iterating the Logistic map over ring $\Zy{n}$ is $3^{n-2}$.
\end{Conjecture}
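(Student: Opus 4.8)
The plan is to exploit the behaviour of the map \eqref{eq:log:ring} modulo $3$. With $\mu\equiv 1\pmod 3$ one has $0\mapsto 0$ and $1\mapsto 2\mapsto 0$, so the only cycle of $f_{3^n}\bmod 3$ is the fixed point $0$; consequently every point lying on a genuine cycle of $f_{3^n}$ reduces to $0$ modulo $3$, i.e.\ \emph{all periodic points are divisible by $3$}. Since the period of any initial value is the length of the cycle it eventually enters, this already shows that $L(\mu,3^n)$ is the largest cycle length occurring inside $3\,\Zy{n}$. It also removes the awkward cases that would otherwise arise in a multiplier analysis: writing $f'_{3^n}(x)=\mu(2x+1)$, along any cycle each factor is $\equiv 1\pmod 3$, so the multiplier of every cycle is $\equiv 1\pmod 3$ and no period-doubling can occur.

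First I would make this reduction exact through the substitution $x=3y$. Because $f_{3^n}(3y)=3\mu y(3y+1)$, the set $3\,\Zy{n}$ is invariant and $f_{3^n}^{\,i}(3y)=3\,g^{\,i}(y)$, where $g(y)=\mu y(3y+1)\bmod 3^{n-1}$ acts on $\Zy{n-1}$; hence $L(3y;\mu,3^n)=L(y;\mu,3^{n-1})$ and $L(\mu,3^n)$ equals the maximum period of $g$ on $\Zy{n-1}$. Since $g'(y)=\mu(6y+1)$ is everywhere a unit, $g$ is a bijection of $\Zy{n-1}$, so every $y$ is periodic. The decisive feature is that $g$ is a \emph{near-identity} map: $g(y)-y=3\psi(y)$ with $\psi(y)=\mu y^2+\tfrac{\mu-1}{3}y$, and $\psi(y)$ is a unit for every unit $y$ when $\mu\equiv 1\pmod 9$, and for all but one residue class of units when $\mu\equiv 4,7\pmod 9$.

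The core of the argument is an estimate for iterating this near-identity map, proved by induction on $i$:
\[
g^{3^{i}}(y)\equiv y+3^{\,i+1}\psi(y)\pmod{3^{\,i+2}}.
\]
The inductive step applies the same computation to $G=g^{3^{i}}$ in place of $g$: writing $G(y)=y+3^{i+1}\Psi(y)$ with $\Psi\equiv\psi\pmod 3$, one expands $G^{3}(y)=y+3^{i+1}\big(\Psi(y)+\Psi(G(y))+\Psi(G^{2}(y))\big)$ and uses $\Psi(G^{j}(y))\equiv\Psi(y)+3^{i+1}j\,\Psi(y)\Psi'(y)$, so the bracket equals $3\Psi(y)$ up to terms of higher $3$-adic order because $0+1+2\equiv 0\pmod 3$, leaving $G^{3}(y)\equiv y+3^{i+2}\psi(y)\pmod{3^{i+3}}$. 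Taking $i=n-2$ gives $g^{3^{n-2}}\equiv\mathrm{id}\pmod{3^{n-1}}$, so every period divides $3^{n-2}$ (hence is a power of $3$); choosing a unit $y$ with $\psi(y)$ a unit makes $g^{3^{i}}(y)\not\equiv y\pmod{3^{n-1}}$ for all $i<n-2$, so that point has period exactly $3^{n-2}$. The two bounds give $L(\mu,3^n)=3^{n-2}$ for $n\ge 2$.

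I expect the main obstacle to be the bookkeeping in the key estimate: one must verify that, at each tripling, the quadratic term $\mu y^2$ of $\psi$ governs the leading correction while every cross term produced by the Taylor expansion of $G^{3}$ lands at $3$-adic level at least $i+3$, so that exactly one factor of $3$ is lost per step and no more. A secondary point, relevant precisely to the case $\mu\equiv 4,7\pmod 9$ that was misreported in \eqref{eq:Lmax:3n}, is that $\psi$ vanishes modulo $3$ on one class of units; there the maximal period is attained only off that class, but its value is still $3^{n-2}$, which is what confirms Conjecture~\ref{conj:3n}.
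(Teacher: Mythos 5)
Your argument is correct and reaches the paper's conclusion (the case $\mu\bmod 3=1$ of Corollary~\ref{Cor:x_0:max3n}) by a route that is parallel in mechanism but genuinely different in packaging. Both proofs rest on the same $3$-adic fact: tripling the iteration count raises the valuation of $F^{N}(x)-x$ by exactly one. The paper establishes this as Lemma~\ref{lemma:3:change}, via the coefficient bookkeeping of Properties~\ref{property:a_1:a_2}--\ref{property:f2_3t} for $F$ acting on $\mathbb{H}_{3^n}$, then derives the general per-orbit formula of Theorem~\ref{thero:3n:period} and finally computes $F(3)$ and $F(6)$ modulo $27$ case by case to pin down the maximum. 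You instead conjugate by $x=3y$ to obtain a near-identity bijection $g$ of $\Zy{n-1}$ and prove the single estimate $g^{3^i}(y)\equiv y+3^{i+1}\psi(y)\pmod{3^{i+2}}$ with the explicit increment $\psi(y)=\mu y^2+\tfrac{\mu-1}{3}y$; the upper bound ($L\mid 3^{n-2}$) and the lower bound (period exactly $3^{n-2}$ whenever $\psi(y)$ is a unit) then come out together, and the dependence on $\mu\bmod 9$ --- including the classes $\mu\equiv 4,7\pmod 9$ that caused the error in Eq.~\eqref{eq:Lmax:3n} --- is visible at a glance from where $\psi$ vanishes modulo $3$. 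What you give up is generality: Theorem~\ref{thero:3n:period} gives the exact period of \emph{every} initial value and covers $\mu\equiv 2\pmod 3$, where the extra factor $\bar{\mu}=2$ enters, whereas your argument is tailored to the maximum for $\mu\equiv 1\pmod 3$. If you write this up, two points deserve care: keep the inductive hypothesis in the polynomial form $g^{3^i}(y)-y=3^{i+1}\Psi_i(y)$ with $\Psi_i\equiv\psi\pmod 3$, so that the telescoping identity $G^{3}(y)-y=3^{i+1}\sum_{j=0}^{2}\Psi_i(G^{j}(y))$ and the cancellation $0+1+2\equiv 0\pmod 3$ apply cleanly at every level; and justify the bijectivity of $g$ by the factorization $g(y)-g(y')=\mu(y-y')\bigl(3(y+y')+1\bigr)$ rather than by the unit derivative alone.
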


Define $\mathbb{H}_{p^n}=\{x\mid x\bmod p=0, x\in\Zp{n}\}$ and
\[
F(x) = \mu x(x+1).
\]
We introduce some properties about $F(x)$, as shown in Properties~\ref{pro:f:bij}, \ref{property:a_1:a_2}, \ref{property:x^n}, and \ref{property:f2_3t}, which are useful for analyzing the explicit expression of the period of sequence generated by iterating the Logistic map over ring $\Zy{n}$.

\begin{property}
\label{pro:f:bij}
Define map $\Gamma:\mathbb{H}_{p^n} \rightarrow \mathbb{H}_{p^n}$ by $\Gamma(x) = \mu x(x+1)\bmod p^n$, then $\Gamma$ is bijective, where $\mu\bmod p\neq 0$.
\end{property}
\begin{proof}
First, let's prove that $\Gamma$ is injective.
Suppose $x',x''\in \mathbb{H}_{p^n}$ and $x'\neq x''$, one has
\begin{align*}
\Gamma(x')-\Gamma(x'')& = (\mu x'(x'+1)-\mu x''(x''+1))\bmod p^n\\
&=\mu(x'-x'')(x'+x''+1)\bmod p^n.
\end{align*}
Since $x',x'' \in \mathbb{H}_{p^n}$, $(x'+x''+1)\bmod p=1$, it follows from the above equation and $\mu\bmod p\neq 0$ that $\Gamma(x')\neq \Gamma(x'')$.
Hence, $\Gamma$ is injective.
Then, let's prove that $\Gamma$ is surjective, namely there exists $x'\in \mathbb{H}_{p^n}$ such that $\Gamma(x')=y'$ for any $y'\in \mathbb{H}_{p^n}$.
Operating polynomial $G(x) = \mu(x+1)x- y'$ over $\Zp{n}$, one has
$G(x)\equiv \bar{\mu}x(x+1)\pmod p$, where
$\bar{\mu}=\mu\bmod p\neq 0$.
Referring to Hensel's Lemma~\cite[Lemma 13.6]{Wan:Lecture:2003}, there exist $f_1(x)$ and $f_2(x)$ such that $G(x)= f_1(x)f_2(x)$ and $f_1(x)\equiv x\pmod p$, $f_2(x)\equiv \bar{\mu}(x+1)\pmod p$. It means
there exists $x'\bmod p=0$ such that $f_1(x) =x-x'$.
Thus, $G(x')=0$ and $\Gamma(x') =y'$.
Hence $\Gamma$ is surjective.
\end{proof}

\begin{property}
\label{property:a_1:a_2}
Function $F(x)$ satisfies
\[
F^n(x)= \sum^{2^n}_{i=3}a_{i,n}x^i+ \sum^{2n-1}_{i=n}\mu^ix^2+\mu^nx,
\]
where $n\geq 1$, $F^n(x)=F^{n-1}(F(x))$, $\mu$ and $a_{i,n}$ are positive integers.
\end{property}
\begin{proof}
Prove this property via mathematical induction on $n$. When $n=1$, $F(x)=\mu x^2+\mu x=\mu x(x+1)$, which means this property holds for $n=1$.
Assume that this property holds for $n=s$, namely
$F^s(x) = \sum^{2^s}_{i=3}a_{i,s}x^i + \sum^{2s-1}_{j=s}\mu^jx^2+\mu^sx.$
When $n=s+1$, one can get
\begin{align*}
F^{s+1}(x) &= F(F^s(x)) \\
&= \sum^{2^{s+1}}_{i=3}a_{i,s+1}x^i +\mu(\mu^{2s}x^2 +\sum^{2s-1}_{j=s}\mu^jx^2+\mu^sx)\\
&= \sum^{2^{s+1}}_{i=3}a_{i,s+1}x^i + \sum^{2s+1}_{j=s+1}\mu^jx^2+\mu^{s+1}x,
\end{align*}
where $a_{i,s+1}$ is a positive integer.
It yields that this property holds for $n=s+1$.
The above induction completes the proof of this property.
\end{proof}

\begin{property}\label{property:x^n}
As for any $x\in \mathbb{H}_{p^n}$ and $n\geq 3$, then
\[(x+k\cdot p^w)^n \equiv x^n \pmod {p^{w+2}},\]
where $k$ and $w$ are positive integers.
\end{property}
\begin{proof}
Since $x\in \mathbb{H}_{p^n}$, $x=b\cdot p$, and $b\in \{0,1,2\cdots, p^{n-1}-1\}$.
It yields that
$(x+k\cdot p^w)^n
=x^n+\sum^{n}_{i=1}\binom{n}{i}b^{n-i}\cdot k^i\cdot p^{n+(w-1)i}.$
As $n\geq 3$ and $i\geq 1$, $n+(w-1)i\geq w+2$ for any $w$.
Thus, $(x+k\cdot p^w)^n \equiv x^n \pmod {p^{w+2}}.$
\end{proof}

\begin{property}
\label{property:f2_3t}
If there is an integer $x\in \mathbb{H}_{p^n}$ satisfying
\begin{equation}
\label{eq:2w:change}
\begin{cases}
F^n(x)\equiv x \pmod{p^w};\\
F^n(x)\not\equiv x \pmod{p^{w+1}},
\end{cases}
\end{equation}
then
\[
F^{i\cdot n}(x) \equiv x+k\cdot p^w
\sum^{i-1}_{j=0}\mu^{jn}
 \pmod {p^{w+2}},
\]
where $n\geq 1$, $w\geq 2$, $k\bmod p\neq0$, and $(\sum^{n-1}_{j=0}\mu^j)\bmod p=0$.
\end{property}
\begin{proof}
Prove this property via mathematical induction on $i$.
When $i=1$, one has
$F^n(x)=x+k\cdot p^w \equiv x+k\cdot p^w \pmod {p^{w+2}}$
from relation~\eqref{eq:2w:change}.
So this property holds for $i=1$. Suppose that this property holds for $i=s$, namely
\[
F^{s\cdot n}(x) \equiv x+k\cdot p^w\sum^{s-1}_{j=0}\mu^{jn} \pmod {p^{w+2}}.
\]
When $i=s+1$, 
from the above congruence and Properties~\ref{property:a_1:a_2},~\ref{property:x^n}, one has
\begin{align*}
F^{(s+1)\cdot n}(x)
&=F^n(F^{s\cdot n}(x))\\
&\equiv \sum^{2^n}_{i=3}a_{i,n}(x+B)^i +\sum^{2n-1}_{j=n}\mu^j(x+B)^2+\mu^{n}(x+B)\pmod{p^{w+2}}\\
&\equiv \sum^{2^n}_{i=3}a_{i,n}x^i +\sum^{2n-1}_{j=n}\mu^jx^2
+2xB\sum^{2n-1}_{j=n}\mu^j+\mu^n(x+B)\pmod{p^{w+2}}\\
&\equiv F^n(x)+ 2xB\cdot u^n\sum^{n-1}_{j=0}\mu^j+\mu^nB \pmod{p^{w+2}},
\end{align*}
where $B=k\cdot p^w\sum^{s-1}_{j=0}\mu^{jn}$. 
Substituting $F^n(x)=x+k\cdot p^w$ and $(\sum^{n-1}_{j=0}\mu^j)\bmod p=0$ into the above congruence, one can get 
\[
F^{(s+1)\cdot n}(x)\equiv F^n(x)+ \mu^nB \pmod{p^{w+2}}\equiv x+k\cdot p^w\sum^{s}_{j=0}\mu^{jn}\pmod {p^{w+2}}.
\]
Thus, this property holds for $i=s+1$.
The above induction completes the proof of this property.
\end{proof}

\subsection{Explicit expression of the period of $S(x_0; \mu, 3^n)$}
\label{subsec:3n}

Let $F_{p^n}$ denote the associate SMN of the Logistic map over ring $\Zp{n}$.
It is constructed as follows: the $p^n$ numbers in ring $\Zp{n}$ are
separately considered as $p^n$ nodes; node $x$ is directly linked to node $y$ if and only if $y=f_{p^n}(x)$ \cite{cqli:network:TCASI2019}.

As a typical example, we draw $F_{3^n}$ with $\mu=19$ and $n=1, 2, 3, 4$ in Fig.~\ref{fig:SMN:3n}, which indicates some general rules:
any node satisfying $x\bmod 3=1$ is directly linked to a node satisfying $x\bmod 3=2$;
any node satisfying $x\bmod 3=2$ is directly linked to a node satisfying $x\bmod 3=0$;
all nodes satisfying $x\bmod 3=0$ in $F_{3^n}$ form some directed cycles for arbitrary parameter $n$.
Such rules are summarized in Property~\ref{pro:F3n}.
In addition, as for a directed cycle $C_n$ in $F_{3^n}$, if the length of $C_n$ is larger than three, $C_n$ is expanded to one cycle of length $3T_c$ in $F_{3^{n+1}}$.
For example, a cycle $``3\rightarrow12\rightarrow21"$ shown in Fig.~\ref{fig:SMN:3n}c)
is expanded to cycle of length nine $``3\rightarrow 66\rightarrow 21 \rightarrow 30 \rightarrow \cdots \rightarrow 3"$ shown in Fig.~\ref{fig:SMN:3n}d).
Moreover, Lemma~\ref{lemma:3:change} gives the condition that the length of cycle increases by three times with increase of parameter $n$.
Finally, combining Property~\ref{pro:F3n} and Lemma~\ref{lemma:3:change},
one can get explicit expression of the period of the Logistic map over $\Zy{n}$
as Theorem~\ref{thero:3n:period}.

\newlength\FourImW
\setlength\FourImW{0.1\columnwidth}
\newlength\twofigwidth
\setlength\twofigwidth{0.5\columnwidth}

\begin{figure}[!htb]
\centering
\begin{minipage}[c]{0.4\twofigwidth}
    \centering
    \includegraphics[width=\FourImW]{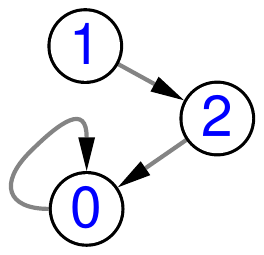}
    \\a)
    \vspace{0.4cm}
    \centering
    \includegraphics[width=2\FourImW]{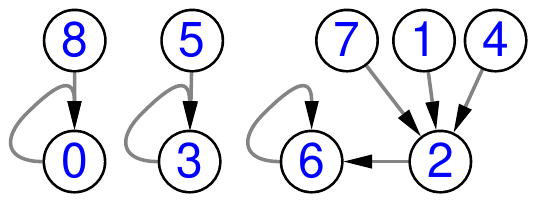}
    \vspace{0.3cm}
    b)\\
    \centering
    \includegraphics[width=4\FourImW]{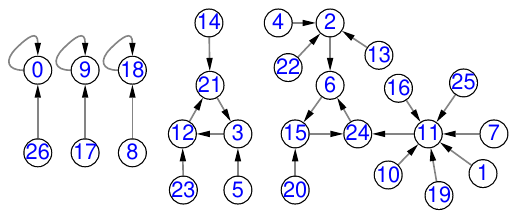}
    c)
\end{minipage}\hspace{3.4cm}
\begin{minipage}[c]{1\twofigwidth}
    \centering
    \includegraphics[width=1.1\twofigwidth]{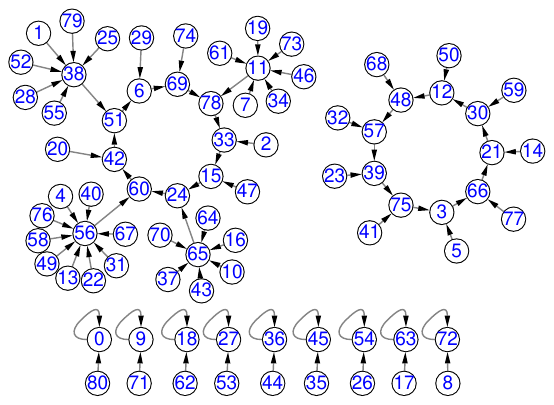}
    d)
\end{minipage}
\caption{Functional graphs of the Logistic map implemented with
\href{https://www.xilinx.com/products/design-tools/vivado.html}{Xilinx Vivado}
on various domains:
a) $\Zy{1}$; b) $\Zy{2}$; c) $\Zy{3}$; d) $\Zy{4}$.}
\label{fig:SMN:3n}
\end{figure}


\begin{property}
\label{pro:F3n}
As for any $x_0\in \Zy{n}$, sequence $S(x_0; \mu, 3^n)$ is periodic if $x_0\bmod 3=0$; ultimately periodic with pre-period $m$ otherwise, where $m=1$ when $x_0\bmod 3=2$ and $m=2$ when $x_0\bmod 3=1$.
\end{property}

\begin{proof}
Referring to Property~\ref{pro:f:bij}, one can get
$\Gamma(x)=\mu x(x+1)\bmod 3^n$ is a bijective function from $\mathbb{H}_{3^n}$ to itself.
Thus, if $x_0\bmod 3=0$, namely $x_0\in\mathbb{H}_{3^n}$, one has sequence $S(x_0; \mu, 3^n)$ is periodic from the definition of the sequence and \cite[Theorem 5.1.1]{hall1959marshall}.
If $x_0\bmod 3\neq 0$, then $x_m= f^m_{3^n}(x_0)\equiv 0 \pmod3$ from Eq.~\eqref{eq:log:ring}.
It means $x_m\in \mathbb{H}_{3^n}$ and sequence $S(x_m; \mu, 3^n)$ is periodic. Thus, sequence $S(x_0; \mu, 3^n)$
is ultimately periodic and its pre-period is $m$.
\end{proof}

\begin{lemma}\label{lemma:3:change}
If there is an integer $x\in \mathbb{H}_{3^n}$ satisfying
\begin{equation}\label{eq:3:change}
\begin{cases}
F^{\bar{\mu}}(x)\equiv x \pmod{3^w}; \\
F^{\bar{\mu}}(x)\not\equiv x \pmod{3^{w+1}},
\end{cases}
\end{equation}
then
\begin{equation}\label{eq:3t:change}
\begin{cases}
F^{\bar{\mu}\cdot 3^{t}}(x)\equiv x \pmod{3^{w+t}}; \\
F^{\bar{\mu}\cdot 3^{t}}(x)\not\equiv x \pmod{3^{w+t+1}},
\end{cases}
\end{equation}
where $w\geq 2$, $t\geq 1$, $\bar{\mu}=\mu \bmod 3\in \{1, 2\}$.
\end{lemma}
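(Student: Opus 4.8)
The plan is to prove~\eqref{eq:3t:change} by induction on $t$, using Property~\ref{property:f2_3t} as the engine that raises the exact power of $3$ dividing the return difference by one at every tripling of the iteration count. It is convenient to let $P(t)$ denote the pair $F^{\bar{\mu}3^{t}}(x)\equiv x\pmod{3^{w+t}}$ and $F^{\bar{\mu}3^{t}}(x)\not\equiv x\pmod{3^{w+t+1}}$; then $P(0)$ is exactly hypothesis~\eqref{eq:3:change}, and establishing $P(t)$ for all $t\ge 1$ is the desired conclusion~\eqref{eq:3t:change}.

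For the step $P(t)\Rightarrow P(t+1)$ with $t\ge 1$, I would feed $P(t)$ into Property~\ref{property:f2_3t} taken with iteration count $n=\bar{\mu}3^{t}$, congruence level $w$ replaced by $w+t$, and $i=3$; the input~\eqref{eq:2w:change} it requires is precisely $P(t)$. The property then returns
\[
F^{\bar{\mu}3^{t+1}}(x)\equiv x+k\cdot 3^{w+t}\bigl(1+\mu^{\bar{\mu}3^{t}}+\mu^{2\bar{\mu}3^{t}}\bigr)\pmod{3^{w+t+2}},\qquad k\bmod 3\neq 0.
\]
Everything then reduces to the $3$-adic valuation of $1+a+a^{2}$ with $a=\mu^{\bar{\mu}3^{t}}$. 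Because $\mu^{\bar{\mu}}\equiv 1\pmod 3$ for both $\bar{\mu}\in\{1,2\}$, one has $a\equiv 1\pmod 3$; writing $a=1+3m$ gives $1+a+a^{2}=3(1+3m+3m^{2})$, so $3$ divides $1+a+a^{2}$ exactly once. Hence the correction on the right has valuation exactly $w+t+1$, which yields both congruences of $P(t+1)$ and keeps the new leading coefficient coprime to $3$, reproducing the hypothesis.

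Two checks precede each use of Property~\ref{property:f2_3t}. Its side condition $\bigl(\sum_{j=0}^{n-1}\mu^{j}\bigr)\bmod 3=0$ holds for $n=\bar{\mu}3^{t}$ whenever $t\ge 1$: the residues $\mu^{j}\bmod 3$ repeat with period $\bar{\mu}$, and the $3^{t}$ complete periods inside $n$ contribute $3^{t}\bigl(\sum_{j=0}^{\bar{\mu}-1}\mu^{j}\bigr)\equiv 0\pmod 3$. The base case $P(0)\Rightarrow P(1)$ for $\bar{\mu}=2$ is the same application with $n=2$, $i=3$, which is legitimate since $\sum_{j=0}^{1}\mu^{j}=1+\mu\equiv 0\pmod 3$.

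The one genuine obstacle is the base case $P(0)\Rightarrow P(1)$ when $\bar{\mu}=1$: here $n=1$ and $\sum_{j=0}^{0}\mu^{j}=1\not\equiv 0\pmod 3$, so Property~\ref{property:f2_3t} does not apply---the term $2xB\mu^{n}\sum_{j=0}^{n-1}\mu^{j}$ discarded in its proof now has valuation only $w+1$ and cannot be dropped modulo $3^{w+2}$. I would settle this step by hand from the factorization $F(a)-F(b)=\mu(a-b)(a+b+1)$, which yields the recursion $F^{r+1}(x)-x=c+\mu\bigl(F^{r}(x)-x\bigr)\bigl(2x+(F^{r}(x)-x)+1\bigr)$ with $c=F(x)-x$ of valuation exactly $w$. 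Since $x\equiv 0\pmod 3$ and $c\equiv 0\pmod 9$ (as $w\ge 2$), unrolling this twice and reducing modulo $3^{w+2}$ shows $F^{3}(x)-x\equiv 3\cdot c\cdot(\text{unit})\pmod{3^{w+2}}$, i.e.\ valuation exactly $w+1$, which is $P(1)$ for $\bar{\mu}=1$ and closes the induction.
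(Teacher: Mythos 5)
Your proposal is correct and follows essentially the same route as the paper: induction on $t$ with Property~\ref{property:f2_3t} (at $i=3$) driving the step, the key point being that $1+\mu^{n}+\mu^{2n}$ has $3$-adic valuation exactly one, and a direct hand computation of $F^{3}(x)-x$ for the base case where the property's side condition fails. You are in fact slightly more careful than the paper, which only details $\bar{\mu}=1$ and does not explicitly verify the condition $\bigl(\sum_{j=0}^{n-1}\mu^{j}\bigr)\bmod 3=0$ before each invocation.
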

\begin{proof}
Assume $\mu\bmod 3=1$, we prove this lemma via mathematical induction on $t$. According to relation~\eqref{eq:3:change},
one has $F(x) = x+k\cdot3^w$, where $k\bmod3\neq 0$.
Then one can calculate
$F^2(x)\equiv x+(1+\mu+2x)k\cdot3^w \not\equiv x\pmod {3^{w+2}}$ and
\begin{align*}
F^3(x)&\equiv F(x+(1+\mu+2x)k\cdot3^w) \pmod {3^{w+2}}\\
&\equiv F(x)+(2\mu x+\mu)(1+\mu+2x)k\cdot3^w \pmod {3^{w+2}}\\
&\equiv F(x)+(\mu +\mu^2)k\cdot3^w + (4\mu+2\mu^2)xk\cdot3^w\pmod {3^{w+2}}\\
&\equiv x+(1+\mu+\mu^2)k\cdot3^w \pmod {3^{w+2}}.
\end{align*}
As $\mu\bmod 3 =1$, it yields $(1+\mu+\mu^2)\bmod 9=3$, and
$(\sum^{3^s-1}_{j=0}\mu^j)\bmod 3=0$.
So, from the above congruence, one has
\begin{equation*}
\begin{cases}
F^{3}(x)\equiv x \pmod{3^{w+1}}; \\
F^{3}(x)\not\equiv x \pmod{3^{w+2}},
\end{cases}
\end{equation*}
and relation~\eqref{eq:3t:change} holds for $t=1$.
Suppose that relation~\eqref{eq:3t:change} holds for $t=s$, namely
\begin{equation}
\label{eq:f3s}
\begin{cases}
F^{3^s}(x)\equiv x \pmod{3^{w+s}}; \\
F^{3^s}(x)\not\equiv x \pmod{3^{w+s+1}}.
\end{cases}
\end{equation}
When $t=s+1$, setting $n=3^s$ in Property~\ref{property:f2_3t}, one can get
\begin{equation}\label{eq:temp:f3s}
F^{3\cdot 3^s}(x)\equiv x+(\mu^{2\cdot3^s}+\mu^{3^s}+1)k'\cdot 3^{w+s} \pmod {3^{w+s+2}}
\end{equation}
from relation~\eqref{eq:f3s}, where $k'\bmod3\neq0$.
It can be known
$(\mu^{2\cdot3^s}+\mu^{3^s}+1) \bmod 9 =3$
from $\mu\bmod 3=1$.
Thus, congruence~\eqref{eq:temp:f3s} becomes
\[F^{3^{s+1}}(x) \equiv x +k'\cdot3^{w+s+1}\pmod {3^{w+s+2}},\]
which yields that relation~\eqref{eq:3t:change} holds for $t=s+1$.
The above induction completes proof of the lemma when $\mu\bmod3=1$.

The proof for the case $\mu\bmod3=2$ is similar and omitted here.
\end{proof}

\begin{theorem}\label{thero:3n:period}
Given an initial value $x_0\in \mathbb{Z}_{3^n}$, the period of sequence
$S(x_0; \mu, 3^n)$ is
$$L(x_0; \mu, 3^n)=\bar{\mu}\cdot 3^{n-v_{x_0}}$$
when $n\ge v_{x_0}$;
$L(x_0; \mu, 3^n)\leq \bar{\mu}$ otherwise,
where $\bar{\mu}=\mu\bmod 3$,
\begin{equation}
\label{eq:def:vx0}
v_{x_0}=\max\{t \mid F^{\bar{\mu}}(x_{i^*})\equiv x_{i^*} \pmod {3^t}\},
\end{equation}
and
\begin{equation}\label{eq:3n:i}
i^*=
\begin{cases}
0& \mbox{if } x_0\bmod 3=0;\\
1& \mbox{if } x_0\bmod 3=2;\\
2& \mbox{if } x_0\bmod 3=1.\\
\end{cases}
\end{equation}
\end{theorem}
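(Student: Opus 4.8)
The plan is to reduce the computation of $L(x_0;\mu,3^n)$ to the cycle length of the single point on which the orbit of $x_0$ settles, and then to read that length off from Lemma~\ref{lemma:3:change}. First I would invoke Property~\ref{pro:F3n}: since $i^*$ is exactly the pre-period dictated by $x_0\bmod 3$, the point $y:=x_{i^*}$ lies in $\mathbb{H}_{3^n}$ and the tail $\{x_i\}_{i\ge i^*}$ is precisely the forward $f_{3^n}$-orbit of $y$. Because the period of an ultimately periodic sequence equals the length of its cycle, it suffices to compute the least $L>0$ with $F^L(y)\equiv y\pmod{3^n}$; by Property~\ref{pro:f:bij}, $F$ acts on $\mathbb{H}_{3^n}$ as a permutation, so $y$ is genuinely periodic and $L(x_0;\mu,3^n)$ is exactly this cycle length.

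Next I would record the arithmetic of $F$ on the leading $3$-adic digit. Writing a nonzero $y\in\mathbb{H}_{3^n}$ as $y=3^{j}e$ with $3\nmid e$ and $1\le j\le n-1$, a one-line computation gives $F(z)\equiv 3^{j}\mu c\pmod{3^{j+1}}$ whenever $z\equiv 3^{j}c\pmod{3^{j+1}}$; hence $F$ preserves the valuation $j$ and acts on the digit $c\bmod 3$ by multiplication by $\bar{\mu}$. Two consequences follow. On one hand, since the multiplicative order of $\bar{\mu}$ modulo $3$ equals $\bar{\mu}$ itself (as $\mathrm{ord}(1)=1$ and $\mathrm{ord}(2)=2$), any return time $L$ forces $\bar{\mu}^{L}\equiv1\pmod 3$, so $\bar{\mu}\mid L$. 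On the other hand, $\bar{\mu}^{\bar{\mu}}\equiv1\pmod 3$ yields $F^{\bar{\mu}}(y)\equiv y\pmod{3^{j+1}}$, whence $F^{\bar{\mu}}(y)\equiv y\pmod 9$ and $v_{x_0}\ge 2$; this is precisely what licenses the hypothesis $w\ge 2$ of Lemma~\ref{lemma:3:change}.

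With $g:=F^{\bar{\mu}}$ and $w=v_{x_0}$, Lemma~\ref{lemma:3:change} gives $g^{3^{t}}(y)\equiv y\pmod{3^{v_{x_0}+t}}$ while $g^{3^{t}}(y)\not\equiv y\pmod{3^{v_{x_0}+t+1}}$. Assuming $n\ge v_{x_0}$, the return time of $y$ under $g$ divides $3^{\,n-v_{x_0}}$ (hence is a power of $3$), and the non-congruence clause excludes every smaller power, so $y$ has exact period $3^{\,n-v_{x_0}}$ under $g$. Translating back via the elementary fact that the $\bar{\mu}$-th power of a $d$-cycle splits into $\gcd(d,\bar{\mu})$ cycles of length $d/\gcd(d,\bar{\mu})$, the period $d$ of $y$ under $F$ satisfies $d/\gcd(d,\bar{\mu})=3^{\,n-v_{x_0}}$. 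Combining this with $\bar{\mu}\mid d$ from the previous step pins down $d=\bar{\mu}\cdot 3^{\,n-v_{x_0}}$. Finally, in the remaining regime $n<v_{x_0}$ the defining congruence already holds modulo $3^{n}$, i.e.\ $F^{\bar{\mu}}(y)\equiv y\pmod{3^{n}}$, so $d\mid\bar{\mu}$ and $L(x_0;\mu,3^n)\le\bar{\mu}$.

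I expect the main obstacle to be the exactness of the factor $\bar{\mu}$ rather than the power of $3$: the power $3^{\,n-v_{x_0}}$ falls out of Lemma~\ref{lemma:3:change} almost mechanically, but ruling out a proper divisor that drops the factor $\bar{\mu}$ (the parity obstruction when $\bar{\mu}=2$) genuinely requires the independent leading-digit argument that $\bar{\mu}\mid d$. Care is also needed at the boundary: the fixed point $y=0$ and the case $v_{x_0}\ge n$ must be folded into the degenerate ``$L\le\bar{\mu}$'' branch. As a consistency check I would note that $F'(x)=\mu(2x+1)\equiv\mu\pmod 3$ is a unit on $\mathbb{H}_{3^n}$, so $v_{x_0}$ is constant along a cycle and the formula assigns the same period to every starting point of a given cycle, as it must.
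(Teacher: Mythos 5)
Your proof is correct, and it rests on the same two pillars as the paper's --- Property~\ref{pro:F3n} to reduce to a periodic point $y=x_{i^*}\in\mathbb{H}_{3^n}$, and Lemma~\ref{lemma:3:change} to control how the return time grows with the modulus --- but the final assembly is genuinely different. The paper fixes $x_0$ and inducts on the exponent of the modulus, proving $L(x_0;\mu,3^{v_{x_0}+t})=\bar{\mu}\cdot 3^{t}$ step by step: the factor $\bar{\mu}$ enters through a base-case contradiction at $t=0$ (assuming $L=1$ forces $F^2(x_0)\equiv x_0 \pmod{3^{v_{x_0}+1}}$) and is then carried along by the divisibility $L(x_0;\mu,3^{m})\mid L(x_0;\mu,3^{m+1})$. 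You instead work at the fixed modulus $3^n$, read the exact $F^{\bar{\mu}}$-period $3^{\,n-v_{x_0}}$ directly off Lemma~\ref{lemma:3:change}, and recover the $F$-period via the cycle-power formula $d/\gcd(d,\bar{\mu})=3^{\,n-v_{x_0}}$ together with the leading-digit observation that $F$ multiplies the lowest nonzero $3$-adic digit by $\bar{\mu}$, forcing $\bar{\mu}\mid d$. Your route avoids the induction on $n$ entirely and makes the provenance of the factor $\bar{\mu}$ more transparent; it also has the merit of explicitly verifying the hypothesis $w=v_{x_0}\ge 2$ of Lemma~\ref{lemma:3:change} (via $F^{\bar{\mu}}(y)\equiv y\pmod{3^{j+1}}$ with $j\ge 1$), a point the paper uses silently. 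The only places needing a touch more care are the boundary conventions: $t=0$ is outside the stated range of Lemma~\ref{lemma:3:change} (though the needed non-congruence there is just the definition of $v_{x_0}$), and the case $v_{x_0}=n$ belongs to the exact formula $L=\bar{\mu}$ rather than to the degenerate branch, while $y=0$ (where $v_{x_0}$ is unbounded) belongs to the degenerate branch as you say.
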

\begin{proof}
As proof of this theorem is similar for different value of $\bar{\mu}$, here
we only present the proof for the case $\bar{\mu}=2$.

According to Property~\ref{pro:F3n},
if $x_0\bmod 3\neq 0$, then $x_{i^*}\bmod3=0$ and sequence $S(x_{i^*}; \mu, 3^n)$ is periodic.
Thus, we only analyze the period of sequence $S(x_0; \mu, 3^n)$ for any $x_0\in \mathbb{H}_{3^n}$.
Referring to Eq.~\eqref{eq:def:vx0} and $\bar{\mu}=2$,
one has
\begin{equation}
\label{eq:3:vx0}
\begin{cases}
F^2(x_0)\equiv x_0 \pmod{3^{v_{x_0}}}; \\
F^2(x_0)\not\equiv x_0 \pmod{3^{v_{x_0}+1}}.
\end{cases}
\end{equation}
When $n<v_{x_0}$, combining relation~\eqref{eq:3:vx0} and the definition of $L(x_0; \mu, 3^n )$, one has $L(x_0; \mu, 3^n )\leq2$.
When $n\geq v_{x_0}$, one can prove
\begin{equation}\label{eq:th1:it}
L(x_0; \mu, 3^{v_{x_0}+t})=2\cdot 3^{t}
\end{equation}
by mathematical induction on $t$.
When $t=0$, $L(x_0; \mu, 3^{v_{x_0}})\leq 2=\bar{\mu}$ from relation~\eqref{eq:3:vx0}.
Assume $L(x_0; \mu, 3^{v_{x_0}}) =1$, then $F(x_0)\equiv x_0 \pmod {3^{v_{x_0}}}$, it means
$F(x_0)=x+k\cdot3^{v_{x_0}}$, where $k$ is an integer.
So
$$F^2(x_0)\equiv x_0+(1+\mu)k\cdot3^{v_{x_0}} \equiv x_0\pmod{3^{v_{x_0}+1}}.$$
But the above congruence contradicts with relation~\eqref{eq:3:vx0}, so $L(x_0; \mu, 3^{v_{x_0}})=2$ and
Eq.~\eqref{eq:th1:it} holds for $t=0$.

Suppose that Eq.~\eqref{eq:th1:it} holds for $t=s$, namely,
\begin{equation}\label{eq:temp1}
L(x_0; \mu, 3^{v_{x_0}+s}) =2\cdot 3^s.
\end{equation}
When $t=s+1$,
from the definition of $L(x_0; \mu, 3^{v_{x_0}+s+1})$, one has
$F^{L(x_0; \mu, 3^{v_{x_0}+s+1})}(x_0)\equiv x_0 \pmod{3^{v_{x_0}+s+1}}$.
Then
$F^{L(x_0; \mu, 3^{v_{x_0}+s+1})}(x_0)\equiv x_0 \pmod{3^{v_{x_0}+s}}$
and $L(x_0; \mu, 3^{v_{x_0}+s})$ divides
$L(x_0; \mu, 3^{v_{x_0}+s+1})$, which
further yields $2\cdot 3^s$ divides
$L(x_0; \mu, 3^{v_{x_0}+s+1})$ from Eq.~\eqref{eq:temp1}.
According to relation~\eqref{eq:3:vx0} and Lemma~\ref{lemma:3:change}, one has
\begin{equation*}
\begin{cases}
F^{2\cdot 3^{s+1}}(x_0)\equiv x_0 \pmod{3^{v_{x_0}+s+1}}; \\
F^{2\cdot 3^s}(x_0)\not\equiv x_0 \pmod{3^{v_{x_0}+s+1}}.
\end{cases}
\end{equation*}
It means that $L(x_0;\mu, 3^{v_{x_0}+s+1})\neq 2\cdot 3^s$ and $L(x_0; \mu$, $3^{v_{x_0}+s+1})$ divides $ 2\cdot3^{s+1}.$
Thus, $L(x_0; \mu, 3^{v_{x_0}+s+1}) = 2\cdot3^{s+1}$.
So, Eq.~\eqref{eq:th1:it} holds for $t=s+1$.
The above induction completes the proof of Eq.~\eqref{eq:th1:it}.
Setting $t=n-v_{x_0}$ in Eq.~\eqref{eq:th1:it} completes proof of the theorem for the typical case.
\end{proof}

When $\mu=20$, $x_0=50$, and $n=7$, one can detect
$L(50; 20, 3^7)=486$ via numerical simulation.
In comparison, one can calculate $\bar{\mu}=2$ and $v_{x_0}=2$,
which further produces $L(50; 20, 3^7)=2\cdot3^{7-2}=486$ from Theorem~\ref{thero:3n:period}.
According to Theorem~\ref{thero:3n:period},
Eq.~\eqref{eq:Lmax:3n} is revised and shown in Corollary~\ref{Cor:x_0:max3n}.
Moreover, referring to the proof process of Corollary~\ref{Cor:x_0:max3n}, one has $L(x_0; \mu,3^n)=L(\mu, 3^n)$
when
\[
f^{i^*}_{3^n}(x_0)\bmod 3^3 \in
\begin{cases}
A\cup B & \mbox{if } \mu\bmod 9\in\{1,2,5\};\\
A & \mbox{if } \mu\bmod 9=4;\\
B & \mbox{if } \mu\bmod 9=7,
\end{cases}
\]
where $i^*$ is given in Eq.~\eqref{eq:3n:i}, $A=\{3, 12, 21\}$, and $B=\{6, 15, 24\}$.

\begin{corollary}
\label{Cor:x_0:max3n}
The maximum period of sequences generated by iterating the Logistic map over ring $\Zy{n}$ is
\[
L(\mu, 3^n)=
\begin{cases}
1 & \mbox{if } \mu\bmod 3=0;\\
3^{n-2} & \mbox{if } \mu\bmod 3=1;\\
2\cdot 3^{n-2} & \mbox{if } \mu\bmod 9\in\{2,5\};\\
2\cdot 3^{n-3} & \mbox{if } \mu\bmod 9=8 \textit{~and~} \mu\bmod27 \neq 17;\\
2\cdot 3^{n-4} & \mbox{if } \mu\bmod27 = 17.
\end{cases}
\]
\end{corollary}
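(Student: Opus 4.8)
The plan is to extract $L(\mu,3^n)$ directly from Theorem~\ref{thero:3n:period}. That theorem writes the period of every orbit as $L(x_0;\mu,3^n)=\bar\mu\cdot 3^{n-v_{x_0}}$ (for $n\ge v_{x_0}$), a quantity strictly decreasing in $v_{x_0}$; hence $L(\mu,3^n)=\bar\mu\cdot 3^{n-v_{\min}}$, where $v_{\min}=\min\{v_{x_0}\mid x_0\in\Zy{n}\}$. By Property~\ref{pro:F3n} all periodic points lie in $\mathbb{H}_{3^n}$, so the task reduces to minimizing $v_{x_0}=v_3(F^{\bar\mu}(x)-x)$ over $x=x_{i^*}\in\mathbb{H}_{3^n}$, where $v_3(\cdot)$ denotes the $3$-adic valuation. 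The case $\mu\bmod 3=0$ is disposed of first and separately: on $\mathbb{H}_{3^n}$ the map strictly raises the $3$-adic valuation (since $3\mid\mu$ and $3\mid x$), so every orbit is absorbed into the fixed point $0$ and $L(\mu,3^n)=1$.

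For $\bar\mu\in\{1,2\}$ I would substitute $x=3z$ and expand $F^{\bar\mu}(x)-x$ as an explicit polynomial in $z$, then compare the $3$-adic valuations of its monomials, noting that taking $z$ coprime to $3$ minimizes every valuation (divisibility of $z$ by $3$ only raises them). For $\bar\mu=1$ one gets $F(x)-x=3z(3\mu z+\mu-1)$; since $\mu\equiv 1\pmod 3$ forces $3\mid(\mu-1)$, the bracket is divisible by $3$, giving $v_{x_0}\ge 2$ for every $x$, and a suitable $z\in\{1,2\}$ makes $3\mu z+\mu-1$ have valuation exactly $1$, so $v_{\min}=2$ and $L(\mu,3^n)=3^{n-2}$ uniformly for $\mu\equiv 1\pmod 3$. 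The need to sometimes take $z=2$ rather than $z=1$, namely when $\mu\equiv 7\pmod 9$, is exactly the subtlety that invalidated the original Eq.~\eqref{eq:Lmax:3n}.

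For $\bar\mu=2$ the expansion reads $F^2(x)-x=3\,g(z)$ with $g(z)=(\mu^2-1)z+3\mu^2(\mu+1)z^2+18\mu^3z^3+27\mu^3z^4$, so $v_{x_0}=1+v_3(g(z))$. For $z$ coprime to $3$ the four monomials have valuations $v_3(\mu^2-1)$, $1+v_3(\mu+1)$, $2$, and $3$, and I would split on $\mu\bmod 9$ and $\mu\bmod 27$. If $\mu\equiv 2,5\pmod 9$ then $v_3(\mu^2-1)=1$ is the unique minimum, so $v_3(g)=1$, $v_{\min}=2$, and $L=2\cdot 3^{n-2}$. If $\mu\equiv 8\pmod 9$ but $\mu\not\equiv 17\pmod{27}$, then either the third monomial is uniquely minimal (when $\mu\equiv 26\pmod{27}$, where $v_3(\mu+1)\ge 3$) or the first and third monomials tie at valuation $2$ without cancelling (when $\mu\equiv 8\pmod{27}$, since there $a+2\mu^3z^2\not\equiv 0\pmod 3$ with $9a=\mu^2-1$); either way $v_3(g)=2$, $v_{\min}=3$, and $L=2\cdot 3^{n-3}$.

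The delicate case, which I expect to be the main obstacle, is $\mu\equiv 17\pmod{27}$. Here the first and third monomials again both have valuation $2$, but now they cancel modulo $3^3$ (because $a\equiv 2$ and $2\mu^3z^2\equiv 1\pmod 3$ sum to $0$), so term-by-term estimation only yields $v_3(g)\ge 3$ and one must carry the expansion one further power of $3$. A direct computation modulo $81$, for instance $g(1)\equiv 54\pmod{81}$ of valuation exactly $3$, shows $v_3(g)=3$, whence $v_{\min}=4$ and $L=2\cdot 3^{n-4}$. Collecting the five cases yields the stated formula; the companion description of the maximizing initial values then follows by recording which residues of $f^{i^*}_{3^n}(x_0)\bmod 3^3$ realize $v_{\min}$ in each case.
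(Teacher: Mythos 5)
Your overall route coincides with the paper's: both reduce the corollary to Theorem~\ref{thero:3n:period} by minimizing $v_{x_0}$ over $\mathbb{H}_{3^n}$, and both do this by expanding $F^{\bar{\mu}}(3z)-3z$ and comparing $3$-adic valuations of its monomials case by case on $\mu\bmod 3$, $\mu\bmod 9$, and $\mu\bmod 27$. The paper's Eqs.~\eqref{eq:f(3k)}--\eqref{eq:F3k:m4} are exactly your monomial bookkeeping, and its witnesses $x_0\bmod 3^t\in\{3,6\}$ are your $z\in\{1,2\}$. Your handling of $\mu\bmod 3=0$ is self-contained where the paper cites an external theorem, and the cases $\mu\equiv 1\pmod 3$, $\mu\equiv 2,5\pmod 9$, and $\mu\equiv 8\pmod 9$ with $\mu\not\equiv 17\pmod{27}$ all check out, including the non-cancellation verification $a+2\mu^3z^2\not\equiv 0\pmod 3$ when $\mu\equiv 8\pmod{27}$.

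The gap is in the last case, $\mu\equiv 17\pmod{27}$, exactly where you predicted the obstacle. The lower bound $v_3(g)\ge 3$ is fine, but the witness computation ``$g(1)\equiv 54\pmod{81}$'' is not a consequence of $\mu\bmod 27$ alone: writing $\mu=17+27j$, one finds
\[
g(1)\equiv 54+27j\pmod{81},\qquad g(2)\equiv 54+54j\pmod{81},
\]
so for $\mu\equiv 44\pmod{81}$ (e.g.\ $\mu=44$, where $g(1)=4096575=81\cdot 50575$ with $3\nmid 50575$) one has $v_3(g(1))=4$, not $3$, and $z=1$ fails to realize the minimum. The conclusion survives because $z=1$ and $z=2$ cannot both fail: if $54+27j\equiv 54+54j\equiv 0\pmod{81}$, then $27j\equiv 0\pmod{81}$, forcing $j\equiv 0\pmod 3$ and hence $g(1)\equiv 54\not\equiv 0\pmod{81}$. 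So you need here the same two-witness argument you already deployed for $\mu\equiv 7\pmod 9$; as written, the single check at $z=1$ does not establish $v_{\min}=4$ for one third of the residues of $\mu$ modulo $81$. (For what it is worth, the paper's own assertion that every $x_0$ with $x_0\bmod 3^5\in\{3,6\}$ gives $v_{x_0}=4$ has the same defect at $x_0=3$, $\mu=44$; only the pair of witnesses saves the statement, which is the fix you should make explicit.)
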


\begin{proof}
When $x_0\mod 3\neq 0$, $x_{i^*}= f^{i^*}_{3^{n}}(x_0)\equiv 0\pmod 3$ from Eq.~\eqref{eq:3n:i}.
So, the period analysis of sequence $S(x_0; \mu, 3^n)$ is the same no matter the value of $x_0\mod 3$.
In the following analysis, we assume $x_0=3k$, where $k$ is an integer.
Depending on the value of $\mu$, the proof is divided
into the following three cases:
\begin{itemize}
\item $\mu\bmod3=0$: one has $L(\mu, 3^n)=1$ from~\cite[Theorem 1]{Yang:log:Scis2017}.

\item $\mu\bmod3=1$: 
In such case, $F(x)\equiv x\pmod {3^2}$ for any $x\in \{0,3,6\}$.
It means that
\begin{equation}\label{eq:3x^2}
F(x_0)\equiv F(x_0\bmod 3^2)\equiv x_0 \pmod{3^2}
\end{equation}
for any $x_0\in \mathbb{H}_{3^n}$.
It yields $v_{x_0}\geq 2$.
Then $L(x_0; \mu, 3^n)\leq 3^{n-2}$ from Theorem~\ref{thero:3n:period}.
So, $L(\mu, 3^n) \leq 3^{n-2}$.
As
\[
F(3)\equiv
\begin{cases}
12 \pmod {3^3} & \mbox{if } \mu\bmod 9 =1;\\
21 \pmod {3^3} & \mbox{if } \mu\bmod 9 =4;\\
~3 \pmod {3^3} & \mbox{if } \mu\bmod 9 =7,
\end{cases}
\]
and
\[
F(6)\equiv
\begin{cases}
15 \pmod {3^3} & \mbox{if } \mu\bmod 9 =1;\\
~6 \pmod {3^3} & \mbox{if } \mu\bmod 9 =4;\\
24 \pmod {3^3} & \mbox{if } \mu\bmod 9 =7,
\end{cases}
\]
and congruence~\eqref{eq:3x^2}, one can get
$v_{x_0}=2$
if initial value $x_0$ satisfies
\[
x_0\bmod 3^3\in
\begin{cases}
\{3,12,6,15\} & \mbox{if } \mu\bmod 9 =1;\\
\{3,21\} & \mbox{if } \mu\bmod 9 =4;\\
\{6,24\} & \mbox{if } \mu\bmod 9 =7.
\end{cases}
\]
It yields from Theorem~\ref{thero:3n:period} that $L(x_0; \mu,3^n)=3^{n-2}$.
Therefore, $L(\mu, 3^n) =3^{n-2}$.

\item $\mu\bmod3=2$:
one can calculate
\[
F^2(3k)=\mu^3(3k)^4+2\mu^3(3k)^3+(\mu^3+\mu^2)(3k)^2+3k\mu^2.
\]
So,
\begin{equation}\label{eq:f(3k)}
F^2(3k)\equiv 3k \pmod{3^2},
\end{equation}
and
\begin{equation}\label{eq:F3k:m3}
\begin{aligned}
F^2(3k) &\equiv(\mu^3+\mu^2)(3k)^2+3k\mu^2\pmod{3^3}\\
&\equiv
\begin{cases}
12k \pmod{3^3}& \mbox{if } \mu\bmod 9=2;\\
21k \pmod{3^3} & \mbox{if } \mu\bmod 9=5;\\
3k \pmod{3^3} & \mbox{if } \mu\bmod 9=8,
\end{cases}
\end{aligned}
\end{equation}
and
\begin{equation}\label{eq:F3k:m4}
\begin{aligned}
F^2(3k) &\equiv 2\mu^3(3k)^3+(\mu^3+\mu^2)(3k)^2+3k\mu^2 \pmod{3^4}\\
&\equiv
\begin{cases}
(3k)^3+ 30k \pmod{3^4} & \mbox{if } \mu\bmod 27=8;\\
3k \pmod{3^4}          & \mbox{if } \mu\bmod 27=17;\\
(3k)^3+ 3k\pmod{3^4}   & \mbox{if } \mu\bmod 27=26.
\end{cases}
\end{aligned}
\end{equation}
As for any $x_0\in \mathbb{H}_{3^n}$, it yields from Eqs.~\eqref{eq:f(3k)},~\eqref{eq:F3k:m3},~\eqref{eq:F3k:m4} that
\begin{align*}
v_{x_0} \geq
\begin{cases}
2 & \mbox{if } \mu\bmod 9\in\{2,5\};\\
3 & \mbox{if } \mu\bmod 9=8 \textit{~and~} \mu\bmod27 \neq 17;\\
4 & \mbox{if } \mu\bmod27 = 17.
\end{cases}
\end{align*}
Thus, from Theorem~\ref{thero:3n:period}, one has
\begin{align*}
L(\mu, 3^n)\leq
\begin{cases}
2\cdot 3^{n-2} & \mbox{if } \mu\bmod 9\in\{2, 5\};\\
2\cdot 3^{n-3} & \mbox{if } \mu\bmod 9=8 \mbox{ and } \mu\bmod 27 \neq 17;\\
2\cdot 3^{n-4} & \mbox{if } \mu\bmod 27 = 17.
\end{cases}
\end{align*}
If the initial value satisfies
$x_0\bmod 3^t \in \{3,6\}$,
one can obtain $v_{x_0}=t-1$ by combining Eqs.~\eqref{eq:f(3k)},~\eqref{eq:F3k:m3}, and~\eqref{eq:F3k:m4},
where
\begin{align*}
t=
\begin{cases}
3   & \mbox{if } \mu\bmod 9\in\{2,5\};\\
4   & \mbox{if } \mu\bmod 9=8 \mbox{~and~} \mu\bmod27 \neq 17;\\
5   & \mbox{if } \mu\bmod 27 = 17.
\end{cases}
\end{align*}
It means $L(x_0; \mu, 3^n)=2\cdot3^{n-t+1}$.
Thus, $L(\mu, 3^n)=2\cdot3^{n-t+1}$ and this corollary holds if $\mu\bmod 3=2$.
\end{itemize}
\end{proof}

\section{Conclusion}

This paper presented explicit expression of the period of sequences generated by iterating the Logistic map from any initial value in ring $\Zy{n}$. Based on the explicit expression, we disclose the maximum period of the sequences.
Moreover, we present sufficient and necessary condition for the sequences achieving the maximum period.
The analysis method can be extended to the variants of the Logistic map and other chaotic maps over ring $\mathbb{Z}_{p^n}$.
Comparing with the chaotic maps owning bijective functional graph in a digital domain, say Cat map studied in 
\cite{cqli:Cat:TC22}, the functional graphs of the Logistic map are much more complex.
Much efforts are deserved to analyze their graph structure over various domains.

\section*{Acknowledgement}
This work was supported by the National Natural Science Foundation of China (no.~92267102) and Scientific
Research Fund of Hunan Provincial Education
Department under Grant 20C1759,
Postgraduate Scientific Research Innovation Project of Hunan Province (no.~CX20210601).
\bibliographystyle{ws-ijbc}
\bibliography{net_log}
\end{document}